\theoremstyle{definition}
\newtheorem{definition}{Definition}[]
\newtheorem{condition}{Conditions}[]
\newtheorem{theorem}{Theorem}[]
\def\BState{\State\hskip-\ALG@thistlm}
\definecolor{Gray}{gray}{0.9}
\definecolor{LightCyan}{rgb}{0.88,1,1}
\DeclareRobustCommand*\textsubscript[1]{%
	\@textsubscript{\selectfont#1}}
\def\@textsubscript#1{%
	{\m@th\ensuremath{_{\mbox{\fontsize\sf@size\z@#1}}}}}
\begin{document}
	
	\title{MG-WFBP: Efficient Data Communication for Distributed Synchronous SGD Algorithms}
	
\author{\IEEEauthorblockN{Shaohuai Shi\IEEEauthorrefmark{1}, Xiaowen Chu\IEEEauthorrefmark{1}, Bo Li\IEEEauthorrefmark{2}}
	\IEEEauthorblockA{\IEEEauthorrefmark{1}Department of Computer Science, Hong Kong Baptist University
		\\\IEEEauthorrefmark{2}Department of Computer Science and Engineering, The Hong Kong University of Science and Technology
		\\\IEEEauthorrefmark{1}\{csshshi, chxw\}@comp.hkbu.edu.hk, \IEEEauthorrefmark{2}bli@cse.ust.hk}
	}
	
	\maketitle
	
\begin{abstract}
Distributed synchronous stochastic gradient descent has been widely used to train deep neural networks on computer clusters. With the increase of computational power, network communications have become one limiting factor on system scalability. In this paper, we observe that many deep neural networks have a large number of layers with only a small amount of data to be communicated. Based on the fact that merging some short communication tasks into a single one may reduce the overall communication time, we formulate an optimization problem to minimize the training iteration time. We develop an optimal solution named merged-gradient WFBP (MG-WFBP) and implement it in our open-source deep learning platform B-Caffe. Our experimental results on an 8-node GPU cluster with 10GbE interconnect and trace-based simulation results on a 64-node cluster both show that the MG-WFBP algorithm can achieve much better scaling efficiency than existing methods WFBP and SyncEASGD.
\end{abstract}

\begin{IEEEkeywords}
	Deep Learning; GPU; Distributed Stochastic Gradient Descent; Gradient Communication; Merged-gradient
\end{IEEEkeywords}

\section{Introduction}
The data-parallel synchronous stochastic gradient descent (S-SGD) method is commonly used as the optimizer to train the large scale deep neural networks (DNNs) \cite{dean2012large}\cite{goyal2017accurate}. In S-SGD, the computing tasks for each mini-batch of training data is distributed to a cluster of computing nodes, and the individual results are aggregated to update the global network model before the next iteration can begin. However, with more computing nodes and the fast-growing computing power of hardware accelerators, the data communication between computing nodes gradually becomes the performance bottleneck \cite{watcharapichat2016ako}\cite{cui2016geeps}. For example, the computing power of Nvidia GPUs has increased by 30x in the last 10 years, whilst it took about 15 years for the network speed to improve from 10Gbps to 100Gbps. Hence it becomes a critical issue to address the imbalance between computing and communication.

Some recent work try to reduce the impact of data communication at both algorithmic and system levels. On one hand, gradients would be quantized and compressed \cite{alistarh2017qsgd}\cite{lin2017deep}\cite{wen2017terngrad} in order to reduce the data size during communication so that the overhead of communication could be reduced, but these methods usually sacrifice the model accuracy. On the other hand, the HPC community has proposed several methods to increase the communication performance of the cluster using both hardware and software approaches \cite{potluri2013efficient}.

In terms of hardware, InfiniBand (IB) and Omni-Path networks can provide much higher communication bandwidth, and are deployed to reduce the performance gap between communication and computation \cite{bayatpour2017scalable}. Regarding the software, the implementation of message passing interface (MPI) has been further optimized to support efficient communication in DNN trainings \cite{bayatpour2017scalable}\cite{awan2017s}. The scaling efficiency of distributed deep learning systems can be modeled as a function of communication-to-computation ratio \cite{wen2017terngrad}. For example, training ResNet-50 \cite{he2016deep} requires about 7.8 billion floating point operations in computation, while it takes 102 MB data communication in one iteration. Higher communication-to-computation ratio results in lower scaling efficiency. 

The layered structure of DNNs makes it possible to overlap the communication and computation during the backward propagation \cite{awan2017s}\cite{zhang2017poseidon}, which is known as wait-free backpropagation (WFBP). WFBP begins to exchange the gradients of a layer immediately after they have been calculated; so if the data communication time of a layer is shorter than the computation time of the gradients of its previous layer, then this communication cost can be fully hidden. However, if very fast hardware accelerators are used while the network speed is relatively slow (i.e., a high communication-to-computation ratio), there can exist many layers whose communication time is longer than the corresponding computation time. In such cases, it becomes important to optimize the communications. We observe that the layer-wise gradient communication in WFBP is suboptimal due to the fact that transmitting a small amount of data cannot fully utilize the network bandwidth in current network topologies due to the startup time of message transmitting. Even the RDMA-based network is difficult to eliminate the high overhead of startup time when transmitting messages \cite{handley2017re}\cite{guo2016rdma}. For example, on our 10GbE platform, exchanging a 200 KB vector across 8 nodes using MPI requires about 1.5ms, while exchanging a 400 KB vector only requires 1.8ms, which means we can merge two 200 KB vectors to one 400 KB vector to reduce the total communication time. Yang et al., \cite{you2017scaling} have also recently noticed this problem, and propose a single-layer communication (SyncEASGD) method in which all the gradients are merged together and transferred once per iteration. As compared to the layer-wise communication in WFBP, it can reduce most of the startup time of data communications. But in their proposed method, gradient communication can only start after the backward pass for all layers are finished, thus they miss the opportunity to overlap the communication with computation. 

We argue that the best way to reduce the training time needs to consider not only how to overlap communication with computation, but also how to improve the communication efficiency by avoiding transmitting small amount of data.

In this paper, we first formulate the communication scheduling problem in S-SGD as an optimization problem that aims to minimize the total training time of an iteration. And then we propose a merged-gradient wait-free backward propagation (MG-WFBP) method and prove its optimality. The time complexity of MG-WFBP is $O(L^2)$ where $L$ is the number of layers in the DNN, and it only needs to be executed once before the whole training process. We implement MG-WFBP in our open-source distributed DL training platform B-Caffe\footnote{https://github.com/shyhuai/B-Caffe.}, and evaluate its performance using two popular DNNs (i.e., GoogleNet \cite{szegedy2015going} and ResNet-50 \cite{he2016deep}). The experimental results on an 8-node cluster with Nvidia Tesla K80 GPU and 10GbE show that MG-WFBP can achieve about $1.2$x to $1.36$x improvement than the state-of-the-art communication algorithms WFBP and SyncEASGD, respectively. To investigate its performance on large clusters, we resolve to trace-based simulation (due to limited hardware resources) on a 64-node cluster. In the 64-node simulation, the results show that MG-WFBP performs more than $1.7$x and $1.3$x speedups compared to WFBP and SyncEASGD respectively. Our contributions are summarized as follows:
\begin{itemize}
	\item We formulate an optimization problem for minimizing the training time of DNNs by merging consecutive data communications, and propose an optimal solution with very low computational cost.
	\item We implement our MG-WFBP algorithm in B-Caffe and make it open-source.
    \item We evaluate the performance of MG-WFBP through both real experiments and simulations, and make comparisons with WFBP, SyncEASGD and TensorFlow.
\end{itemize}

The rest of the paper is organized as follows. We present the preliminaries in Section \ref{s:pre}, followed by the formulation of the existing problem in Section \ref{s:profor}. We derive an optimal solution to the problem and then present our MG-WFBP S-SGD algorithm in Section \ref{s:method}. Section \ref{s:eval} demonstrates the evaluation of the proposed method with experimental results. Section \ref{s:bm} introduces the related work, and finally we conclude this paper in Section \ref{s:conclusion}.

\section{Preliminaries}\label{s:pre}

For ease of presentation, we summarize the frequently used mathematical notations in Table \ref{table:notation}.
\begin{table}[!ht]
	\centering
	\caption{Frequently used notations}
	\label{table:notation}
	\begin{tabular}{|l|l|}
		\hline
		Name &  Description \\\cline{1-2}
		\hline
		\hline
		$N$ & The number of nodes in the cluster. \\
		$\alpha$ & Latency (startup time) of the network between two nodes. \\
		$\beta$ & Transmission time per byte between two nodes. \\
		$\gamma$ & Summation time of two floating point numbers in one node. \\
		$a$ & Latency (startup time) of all-reduce.\\
		$b$ & Transmission and computation time per byte of all-reduce. \\
		$M$ & The size of a message in bytes. \\\cline{1-2}
		$W$ & Weights of the DNN. \\		
		$D_i^g$ & The input data size for the $g^{th}$ node at the $i^{th}$ mini-batch.\\\cline{1-2}
		$L$ & The number of learnable layers of a DNN.\\
		$p^{(l)}$ & The number of parameters in the learnable layer $l$.\\
		$t_{iter}$ & Time of an iteration.\\
		$t_{f}$ & Time of the forward pass in each iteration.\\
		$t_{b}$ & Time of the backward propagation in each iteration.\\
		$t_{u}$ & Time of the model update in each iteration.\\
		$t_{b}^{(l)}$ & Time of the backward propagation of layer $l$ in each iteration.\\
		$\tau_{b}^{(l)}$ & The timestamp when layer $l$ begins calculating gradients.\\
		$\mu_{b}^{(l)}$ & The timestamp when layer $l$ finishes calculating gradients.\\
		$t_{c}$ & Time of gradient aggregation in each iteration.\\
		$t_{c}^{no}$ & The not overlapped communication cost in each iteration.\\
		$t_{c}^{(l)}$ & Time of gradient aggregation of layer $l$ in each iteration.\\
		$\tau_{c}^{(l)}$ & The timestamp when layer $l$ begins communicating gradients.\\
		$\mu_{c}^{(l)}$ & The timestamp when layer $l$ finishes communicating gradients.\\\cline{1-2}
	\end{tabular}
\end{table}
\subsection{Mini-batch SGD}

The DNN needs a loss function $\mathcal{L}(W,D)$, where $W$ and $D$ are the model weights and the input data respectively, to define the differences between the prediction values and the ground truth. To minimize the loss function, the mini-batch SGD updates the parameters iteratively. Typically, the $i^{th}$ iteration of the training includes four steps: 1) A mini-batch of data $D_i$ ($D_i\subset D$) is read as inputs of the DNN. 2) $D_i$ is fed forward across the neural network from layer $1$ to layer $L$ to compute the prediction values at the last layer, and the value of the loss function $\mathcal{L}(W,D)$ is computed. 3) The first order gradients w.r.t. parameters and inputs are calculated and backpropagated with from layer $L$ to layer $1$. 4) Finally, the parameters are updated with the layer-wise gradients. The training is terminated when some stopping criteria are matched. The update of $W$ can be formulated as follows:
\begin{equation}
W_{i+1}=W_{i}-\eta\cdot\nabla\mathcal{L}(W_{i},D_{i}),
\end{equation}
where $\eta$ is the learning rate of SGD, $W_{i}$ is the weights at $i^{th}$ iteration, and $\nabla\mathcal{L}(W_{i},D_{i})$ are the gradients. The time consumed in the training processes are mainly in steps 2 and 3, because step 1 of the $i^{th}$ iteration can overlap with the $(i-1)^{th}$ iteration, and the time of step 4 is negligible. Therefore, we can simplify the time-line of SGD to forward and backward passes. The time of one iteration is represented by $t_{iter}=t_f+t_b$.

\subsection{Synchronized SGD for clusters}
For large-scale DNNs, the data-parallelism synchronized SGD (S-SGD) is widely applied to train models with multiple workers (say $N$ workers, and indexed by $g$). Each worker takes a different mini-batch of data $D_{i}^{g}$ and forwards it by step 2), and then follows step 3) to calculate the gradients $\nabla\mathcal{L}(W_{i},D_{i}^{g})$. In this way, each worker has a copy of the model, while the gradients are not the same in each iteration since the input data are different; therefore, to keep explicitly the same as SGD, it needs to average the gradients from different workers before updating the model. The update formula of parameters is rewritten as
\begin{equation}\label{equ:ssgd}
W_{i+1}=W_{i}-\eta\cdot\frac{1}{N}\sum_{g=1}^{N}\nabla\mathcal{L}(W_{i},D_{i}^{g}).
\end{equation}
As a result, the averaging operation of gradients across the cluster involves extra computation and communication overheads such that it is not easy to achieve linear scaling in the distributed SGD training. The time-line of the naive S-SGD (i.e., computation and communication are not overlapped) with communication overheads is illustrated in Fig. \ref{fig:ssgd}. The naive S-SGD algorithm suffers from the waiting period of data communication of model synchronization at every iteration. The iteration time of the naive S-SGD can be estimated as
\begin{equation}\label{equ:tssgd}
t_{iter}=t_{f}+t_{b}+t_{c},
\end{equation}
where $t_{b}=\sum_{l=1}^{L}t_{b}^{(l)}$ is the backward propagation time and $t_{c}=\sum_{l=1}^{L}t_{c}^{(l)}$ is the gradient aggregation time which heavily relies on the communication speed.
\begin{figure}[!ht]
	\centering
	\includegraphics[width=\linewidth]{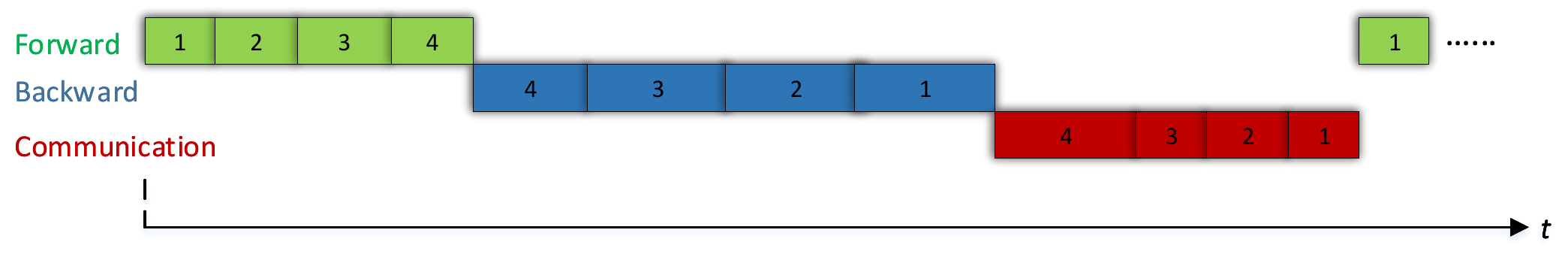}
	\vspace{-20pt}
	\caption{The time-line of naive S-SGD for a 4-layer network with communication overheads. The feed forward cannot be started until the end of model communication (red rectangles) of its previous iteration.}
	\label{fig:ssgd}
\end{figure}

Considering S-SGD using weak-scaling running on $N$ nodes, we define the speedup of S-SGD compared to the single-node SGD:
\begin{equation}
S(N)=\frac{N|D_i^{g}|/(t_f+t_b+t_c)}{|D_i^{g}|/(t_f+t_b)}=\frac{N}{1+\frac{t_c}{t_f+t_b}},
\end{equation}
where $|D_i^{g}|$ is the number of samples per node at the $i^{th}$ iteration. Let $r=\frac{t_c}{t_f+t_b}$, which reflects the communication-to-computation ratio, we have
\begin{equation}\label{equ:speedupsync}
S(N)=\frac{N}{1+r}.
\end{equation}

\subsection{WFBP}
In WFBP, the layer-wise gradient communication can be overlapped with the backward propagation of its previous layer. An example of S-SGD with WFBP is illustrated in Fig. \ref{fig:wfbp}. For simplicity, we assume that the start timestamp of the forward pass is $0$, then the start timestamp of backward pass of each layer can be represented by
\begin{equation}\label{equ:startcomp}
\tau_b^{(l)}=
\begin{cases}
t_f & l=L\\
\tau_b^{(l+1)}+t_b^{(l+1)} & 1\leq l<L
\end{cases}.
\end{equation}
And the start timestamp of communication of each layer can be represented by
\begin{equation}\label{equ:startt}
\tau_c^{(l)}=
\begin{cases}
\tau_b^{(l)}+t_b^{(l)} & l=L\\
\text{max}\{\tau_c^{(l+1)}+t_c^{(l+1)}, \tau_b^{(l)}+t_b^{(l)}\} & 1\leq l<L
\end{cases}.
\end{equation}
The iteration time can be rewritten as
\begin{equation}\label{equ:wfbpiter}
\begin{split}
t_{iter}&=t_f+t_b^{(L)}+t_c^{(1)}-\tau_c^{(L)}+\tau_c^{(1)}\\
&=t_c^{(1)}+\text{max}\{\tau_c^{(2)}+t_c^{(2)}, \tau_b^{(1)}+t_b^{(1)}\}.
\end{split}
\end{equation}
Since some communication costs being overlapped overlap with the computation, the non-overlapped communication cost, $t_{c}^{no}$, becomes the bottleneck of the system. In WFBP, we redefine $r=\frac{t_c^{no}}{t_f+t_b}$, so the main problem of WFBP is that when the communication cannot be fully overlapped by computation, i.e., $\tau_c^{(l+1)}+t_c^{(l+1)} >\tau_b^{(l)}+t_b^{(l)}$, $t_c^{no}$ will limit the system scalability.
\begin{figure}[!ht]
	\centering
	\includegraphics[width=\linewidth]{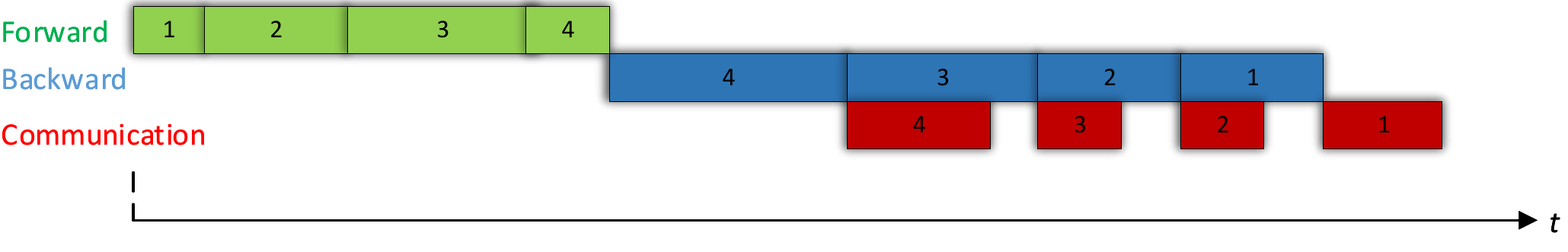}
	\vspace{-20pt}
	\caption{The time-line of WFBP for a 4-layer neural network with communication overheads. Gradient communication of each layer begins immediately after the backward step of that layer.}
	\label{fig:wfbp}
	\vspace{-10pt}
\end{figure}

\subsection{Communication model}
In Eq. \ref{equ:ssgd}, we use $\Delta W_i=\sum_{g=1}^{N}\nabla\mathcal{L}(W_{i},D_{i}^{g})$ to represent the aggregation of gradients from $N$ workers, which is an all-reduce operation. There are many optimized algorithms for the all-reduce operation with different number of processes and message sizes \cite{rabenseifner2004optimization}\cite{thakur2005optimization}\cite{hoefler2010toward}. To simplify the problem, we assume that the number of GPUs is power-of-two, and the peer to peer communication cost is modeled as $\alpha+\beta M$ \cite{sarvotham2001connection}, where $\alpha$ is the latency component, $\beta$ is the communication speed, and $M$ is the message size. Without loss of generality, we do not limit the communication model to one specific algorithm. Given a constant number of GPUs $N$, the time cost of all-reduce can be generalized as
\begin{equation}\label{equ:tcomm}
T_{ar}(M)=a+bM,
\end{equation}
where $a$ and $b$ are two constant numbers that are not related to $M$. 
Some optimized all-reduce algorithms are summarized in Table \ref{table:allreduce}.

\begin{table}[!ht]
		\centering
		\caption{Cost of different all-reduce algorithms}
		\label{table:allreduce}
		\begin{tabular}{|l|c|c|}
			\hline
			All-reduce Algorithm &  $a$ & $b$ \\\hline
			\hline
			Binary tree & $2\alpha log_2N$ & $(2\beta+\gamma)log_2N$ \\\hline
			Recursive doubling& $\alpha log_2N$ & $(\beta+\gamma)log_2N$  \\\hline
			Recursive halving and doubling& $2\alpha log_2N$ & $2\beta-\frac{1}{N}(2\beta+\gamma)+\gamma$ \\\hline
			Ring & $2(N-1)\alpha$ & $\frac{2(N-1)}{N}\beta+\frac{(N-1)}{N}\gamma$  \\\hline
		\end{tabular}
\end{table}
With a given hardware configuration (i.e., $N, \alpha, \beta$, and $\gamma$ are fixed), the time cost of the all-reduce operation is a linear function of the variable $M$. The linear function has an y-intercept $a$ and a scope $b$.

One important property of WFBP is that the messages are communicated layer by layer, which means that it needs to do many all-reduce operations. In each all-reduce, however, there is an extra cost of $a$ which is not related with $M$. Importantly, the linear function with a positive y-intercept value has a property of
\begin{equation}\label{equ:pro}
T_{ar}(M_{1})+T_{ar}(M_{2}) > T_{ar}(M_1+M_2).
\end{equation}
In other words, communicating an $M_1+M_2$ bytes of message is more efficient than communicating an $M_1$ message and an $M_2$ message separately.

\section{Problem Formulation}\label{s:profor}

\subsection{Problem formulation}
As we have shown that merging the gradients can improve the communication efficiency in Eq. \ref{equ:pro}, we further discuss three cases of WFBP to explore under what scenarios can we merge gradients to reduce the iteration time. The three cases are illustrated in Fig. \ref{fig:pipeline}.
\begin{figure}[!ht]
	\centering
	\includegraphics[width=\linewidth]{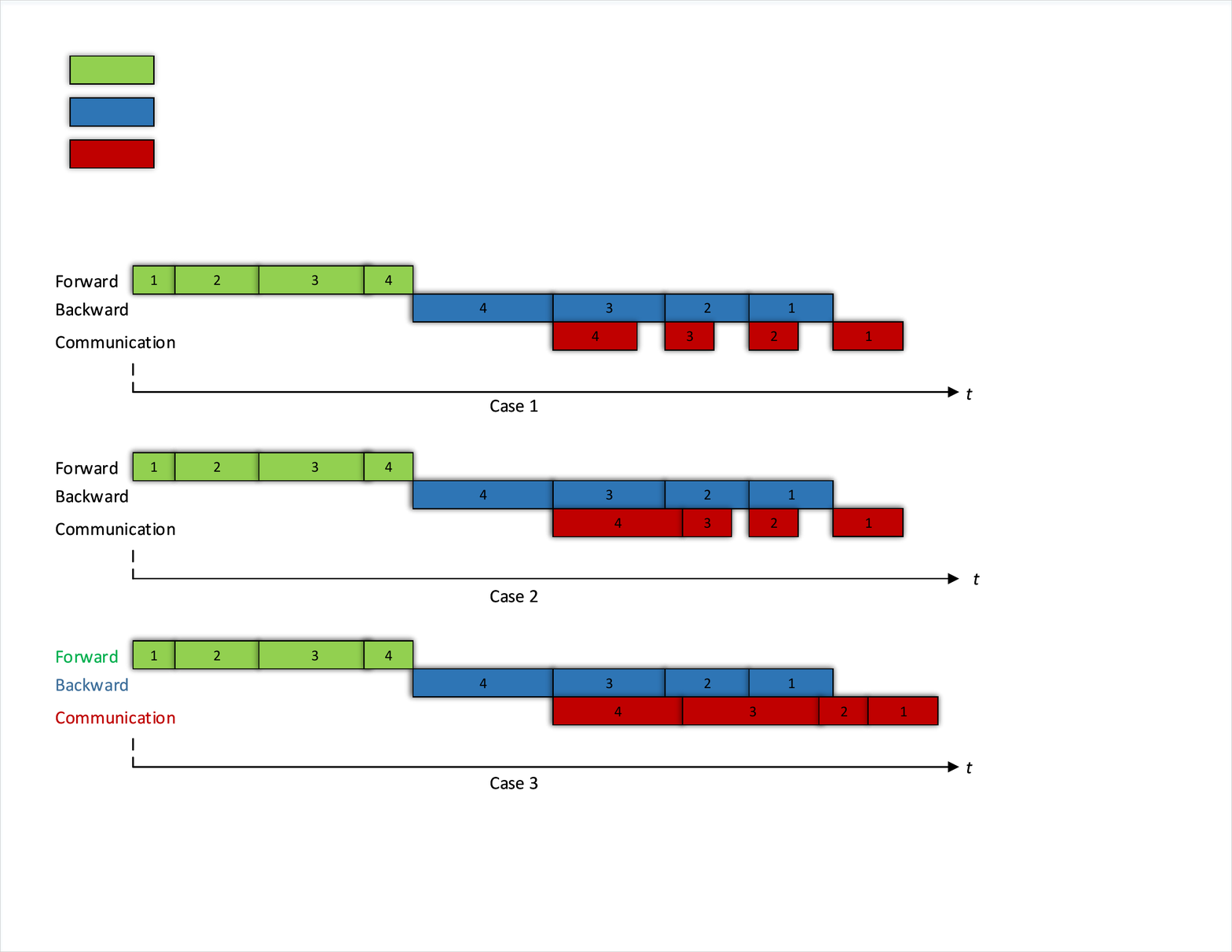}
	\vspace{-20pt}
	\caption{Three cases of WFBP.}
	\label{fig:pipeline}
\end{figure}

\textbf{Case 1}. In the ideal case, for $2\leq l \leq L, t_{c}^{(l)}\leq t_{b}^{(l-1)}$. The overhead of gradient communication is totally hidden by computation so that it is not necessary to merge the gradients. The iteration time is
\begin{equation}\label{equ:opt}
t_{iter}=t_{f}+t_{b}+t_{c}^{(1)}.
\end{equation}

\textbf{Case 2}. There exists a layer (e.g., the $C^{th}$ layer) whose communication time cannot be totally overlapped by computation, i.e., $t_{c}^{(C)} > t_{b}^{(C-1)}$, but all its following layers' communication can be fully overlapped. In other words, there exists a $K$ ($C<K$), such that $\sum_{j=K}^{C-1}t_{c}^{(j)}<\sum_{j=K}^{C-1}t_{b}^{(j-1)}$, which means from $K^{th}$ to $C^{th}$ layers, the communication can be fully overlapped. As shown in the second sub-figure in Fig. \ref{fig:pipeline}, the communication time of layer $4$ is larger than the computation time of layer $3$, while the total communication time of layer 4 and layer 3 is shorter than the total computation time of layer 3 and layer 2. The communication time can also be totally hidden, which is the same with Case 1, so we have
\begin{equation}
t_{iter}=t_{f}+t_{b}+t_{c}^{(1)}.
\end{equation}
In both Case 1 and Case 2, $r=\frac{t_c^{(1)}}{t_f+t_b}$, which is generally a small value because there is only an extra communication overhead from layer 1.

\textbf{Case 3}. Contrary to Case 2, if $t_{c}^{(C)} > t_{b}^{(C-1)}$, there does not exist a $K$, where $2\leq K<C$, such that $\sum_{j=K}^{C-1}t_{c}^{(j)}<\sum_{j=K}^{C-1}t_{b}^{(j-1)}$. In other words, from the $2^{nd}$ to the $C^{th}$ layer, the sum of communication costs can not be totally overlapped, so that the communication becomes the bottleneck. We have
\begin{equation}
t_{iter}=t_{f}+\sum_{j=C-1}^{L}t_{b}^{(j-1)}+\sum_{j=1}^{C}t_{c}^{(j)}.
\end{equation}
Both Case 1 and Case 2 are ideal cases that the overhead of communication can be easily hidden. In the high latency or low bandwidth network environment, Case 3 could be more often happened. The main problem of Case 3 is that many layers' communication overheads cannot be hidden by the computation.

From the property of Eq. \ref{equ:pro}, two or more small messages can be merged to one larger size message before being exchanged. In other words, there exists an $m$, where $2\leq m < C$, such that we can merge the gradients from the $m^{th}$ layer to  the $C^{th}$ layer, and the merged gradients can be communicated with a cost that can be hidden by computation or smaller than the original cost. An example is shown in Fig. \ref{fig:optpipeline}.
\begin{figure}[!ht]
	\centering
	\includegraphics[width=\linewidth]{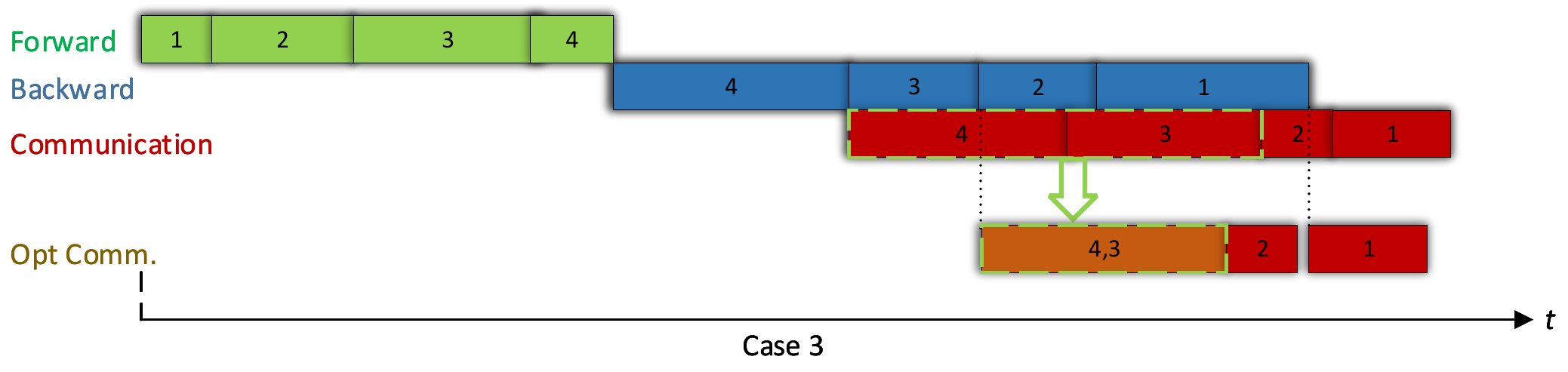}
	\vspace{-20pt}
	\caption{Merged gradient communication.}
	\label{fig:optpipeline}
\end{figure}
The gradients of layer $4$ and layer $3$ are merged into one message to be exchanged with other nodes, which could reduce the overall transmission time. As a result, the communication of layer $2$ can also be finished before the computation of layer $1$.

In order to reduce the cost of communication, we need to find the optimal $\mathbb{M}$ so that for $m\in \mathbb{M}$, it has $t_{c}^{(C,m)}+\sum_{j=2}^{m-1}t_{c}^{(j)}\leq \sum_{j=1}^{m-1}t_{b}^{(j)}$. There may exist two scenarios: (1) If such $m$ exists, then the iteration time $t_{iter}$ becomes the same as Eq. \ref{equ:opt}. (2) If such $m$ does not exist, then we need to minimize the total cost of communication. In summary, we need to find the set of layers $\mathbb{M}=\{l|2\leq l \leq C\}$, whose gradients are merged with their previous layers and be communicated together, such that the iteration time is minimal.

\begin{definition}{(Merged-gradient).}
A layer $l$ is called a merged-gradient layer if at the timestamp of $\tau_c^{(l)}$, instead of communicating the gradients of that layer, merging its gradients to its previous layer $l-1$ to be communicated together. The operator $\oplus$ defines the gradients merging between two consecutive layers, say $(l)\oplus (l-1)$.
\end{definition}
\begin{condition}
If layer $l$ is a merged-gradient layer, then the following three conditions hold.
\begin{equation}\label{ass:1}
t_{c}^{(l)}=0
\end{equation}
\begin{equation}\label{ass:2}
\tau_c^{(l-1)}=\text{max}\{\tau_c^{(l)}, \tau_b^{(l-1)}+t_b^{(l-1)}\}
\end{equation}
\begin{equation}\label{ass:3}
p^{(l-1)}=p^{(l-1)}+p^{(l)}
\end{equation}
\end{condition}
Condition \ref{ass:1} indicates that layer $l$ does not need to be communicated by itself so it has zero overhead. Condition \ref{ass:2} is obvious by Eq. \ref{equ:startt}. Condition \ref{ass:3} indicates that the gradients of $l$ should be merged to layer $l-1$ such that the number of updated parameters of layer $l-1$ becomes the summation of layer $l$ and layer $l-1$.

From Eq. \ref{equ:comm}, the communication time of each layer is represented by
\begin{equation}\label{equ:comm}
t_{c}^{(l)}=T_{ar}(p^{(l)}).
\end{equation}
To be more generalized, the time cost of backward computation is modeled as a function with respect to the FLOPS (floating-point operation per second) of the processor $G$, the number of parameters and some other factors $\theta$ (e.g., batch size and type of layers) \cite{hang2017paleo}, say
\begin{equation}\label{equ:comp}
t_b^{(l)}=T_b(p^{(l)}, G, \theta)
\end{equation}
Plug Eq. \ref{equ:comm} and Eq. \ref{equ:comp} in Eq. \ref{equ:startt} and Eq. \ref{equ:startcomp}, we obtain
\begin{equation}\label{equ:startcompfinal}
\tau_b^{(l)}=
\begin{cases}
t_f & l=L\\
\tau_b^{(l+1)}+T_b(p^{(l+1)}, G, \theta) & 1\leq l<L
\end{cases}.
\end{equation}
And the start time of communication of each layer can be represented by
\begin{equation}\label{equ:starttfinal}
\tau_c^{(l)}=
\begin{cases}
\tau_b^{(l)}+T_b(p^{(l)}, G, \theta) & l=L\\
\text{max}\{\tau_c^{(l+1)}+T_{ar}(p^{(l+1)}), \tau_b^{(l)}+t_b^{(l)}\} & 1\leq l<L
\end{cases}.
\end{equation}
The iteration time can be rewritten as
\begin{equation}\label{equ:iterfinal}
t_{iter}=T_{ar}(p^{(1)})+\text{max}\{\tau_c^{(2)}+T_{ar}(p^{(2)}), \tau_b^{(1)}+T_b(p^{(1)}, G)\}.
\end{equation}
So we can formulate the problem as follows. Given a DNN with $L$ learnable layers training with S-SGD across $N$ nodes, we want to find a set of merged-gradient layers
\begin{equation}
\mathbb{M}=\{l|\text{layer } l \text{ is a merged-gradient layer, and }2\leq l\leq L\},
\end{equation}
such that the iteration time $t_{iter}$ of Eq. \ref{equ:iterfinal} is minimal. In a specific cluster of $N$ nodes, and each node has $G$ FLOPS of computation capability, we can remove the notation of $G$ in Eq. \ref{equ:iterfinal}. At last, we formulate the following optimization problem:
\begin{equation}\label{equ:prob}
\text{minimize } T_{ar}(p^{(1)})+\text{max}\{\tau_c^{(2)}+T_{ar}(p^{(2)}), \tau_b^{(1)}+T_b(p^{(1)}))\}.
\end{equation}

\section{Solution}\label{s:method}

In this section, we first perform some theoretical analysis on the optimization problem, and then we propose an optimal and efficient solution to the problem.

\subsection{Theoretical analysis}
The first term of Eq. \ref{equ:prob} can be neglected to find the solution, so the objective function becomes
\begin{equation}\label{equ:newopt}
\begin{split}
t&=\text{max}\{\tau_c^{(2)}+T_{ar}(p^{(2)}), \tau_b^{(1)}+T_b(p^{(1)})\}\\
&=\text{max}\{
\text{max}\{\tau_c^{(3)}+T_{ar}(p^{(3)}), \tau_b^{(2)}+T_b(p^{(2)})\}+T_{ar}(p^{(2)}),\\
& \tau_b^{(1)}+T_b(p^{(1)})\}
\end{split}
\end{equation}

Assume that layer $3$ is a merged-gradient layer, we have $t_c^{(3)}=0$ and $t_c^{(2)}=T_{ar}(p^{(2)}+p^{(3)})$. We plug in these two new values to the above equation. Thus,
\begin{equation}\label{equ:merged}
\begin{split}
\hat{t}&=\text{max}\{
\text{max}\{\tau_c^{(3)}, \tau_b^{(2)}+T_b(p^{(2)})\}+T_{ar}(p^{(2)}+p^{(3)}),\\
& \tau_b^{(1)}+T_b(p^{(1)})\}
\end{split}
\end{equation}
Compare Eq. \ref{equ:newopt} to Eq. \ref{equ:merged}, we want to prove that in what conditions $\hat{t}< t$, i.e., layer $3$ can be a gradient-merged layer. In Eq. \ref{equ:newopt} and Eq. \ref{equ:merged}, the second term of outer max function is the same, so we can just compare the first term, i.e.,
\begin{align*}
\text{max}\{\tau_c^{(3)}, \tau_b^{(2)}+T_b(p^{(2)})\}+T_{ar}(p^{(2)}+p^{(3)}) < \\
\text{max}\{\tau_c^{(3)}+T_{ar}(p^{(3)}), \tau_b^{(2)}+T_b(p^{(2)})\}+T_{ar}(p^{(2)}).
\end{align*}
Since $\tau_b^{(1)}=\tau_b^{(2)}+T_b(p^{(2)})$, we simplify the above inequality and obtain
\begin{equation}\label{equ:ineq}
\begin{split}
\text{max}\{\tau_c^{(3)}, \tau_b^{(1)}\}+T_{ar}(p^{(2)}+p^{(3)}) < \\
\text{max}\{\tau_c^{(3)}+T_{ar}(p^{(3)}), \tau_b^{(1)}\}+T_{ar}(p^{(2)}).
\end{split}
\end{equation}
To eliminate the max function, we consider the following three conditions.

\textbf{C.1}. $\tau_b^{(1)}<\tau_c^{(3)}$. The inequality \ref{equ:ineq} becomes
\begin{align*}
\tau_c^{(3)}+T_{ar}(p^{(2)}+p^{(3)}) <\tau_c^{(3)}+T_{ar}(p^{(3)})+T_{ar}(p^{(2)}),
\end{align*}
i.e.,
\begin{align*}
T_{ar}(p^{(2)}+p^{(3)}) <T_{ar}(p^{(3)})+T_{ar}(p^{(2)}),
\end{align*}
which holds due to the property of $T_{ar}$ in Eq. \ref{equ:pro}.

\textbf{C.2}. $\tau_c^{(3)}\leq \tau_b^{(1)}<\tau_c^{(3)}+T_{ar}(p^{(3)})$. Then we obtain
\begin{align*}
\tau_b^{(1)}+T_{ar}(p^{(2)}+p^{(3)}) <\tau_c^{(3)}+T_{ar}(p^{(3)})+T_{ar}(p^{(2)}).
\end{align*}
According to Eq. \ref{equ:tcomm},
\begin{align*}
T_{ar}(p^{(3)})+T_{ar}(p^{(2)})-T_{ar}(p^{(2)}+p^{(3)})=a,
\end{align*}
then the above inequality holds under the condition
\begin{align*}
\tau_b^{(1)}-\tau_c^{(3)} < a.
\end{align*}

\textbf{C.3}. $\tau_b^{(1)}\geq \tau_c^{(3)}+T_{ar}(p^{(3)})$. The inequality \ref{equ:ineq} becomes
\begin{align*}
\tau_b^{(1)}+T_{ar}(p^{(2)}+p^{(3)}) < \tau_b^{(1)}+T_{ar}(p^{(3)})+T_{ar}(p^{(2)}),
\end{align*}
i.e., $
T_{ar}(p^{(2)}+p^{(3)}) < T_{ar}(p^{(3)})+T_{ar}(p^{(2)}),$
which is in contradiction with Eq. \ref{equ:tcomm}. So under \textbf{C.3}, layer $3$ should not be a merged-gradient layer.

\begin{theorem}\label{theorem:opt}
Given an $L$-layer DNN which is trained with S-SGD in a cluster of $N$ nodes, if the gradient communication is done through all-reduce, one can find all the merged-gradient layers $\mathbb{M}$ such that the iteration time is minimal, and
\begin{equation}\label{the:solution}
\mathbb{M}=
\{
l|\tau_b^{(l-2)}-\tau_c^{(l)} < a, \text{and }1<l\leq L
\}.
\end{equation}
\end{theorem}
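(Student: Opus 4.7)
The plan is to extend the three-case analysis already carried out for layer $3$ to an arbitrary layer $l$ with $2 < l \leq L$, and then stitch the layer-wise criteria together into a global optimality statement. For the extension, I would replay the derivation of the excerpt verbatim with the index shift $(3,2,1) \mapsto (l, l-1, l-2)$: holding the merge decisions for all layers above $l$ fixed (so that $\tau_c^{(l)}$ and $p^{(l)}$ are determined), the objective in Eq.~\ref{equ:iterfinal} differs between the ``merge layer $l$'' and ``do not merge layer $l$'' options only in the terms involving $T_{ar}(p^{(l)})$ and $T_{ar}(p^{(l-1)})$. The comparison then reduces to an inequality of the same form as Eq.~\ref{equ:ineq}, and splitting on whether $\tau_b^{(l-2)}$ lies below $\tau_c^{(l)}$, between $\tau_c^{(l)}$ and $\tau_c^{(l)} + T_{ar}(p^{(l)})$, or above $\tau_c^{(l)} + T_{ar}(p^{(l)})$ reproduces cases \textbf{C.1}--\textbf{C.3}. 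Each sub-case yields the single unified criterion $\tau_b^{(l-2)} - \tau_c^{(l)} < a$ for merging to strictly reduce the iteration time, which is precisely the condition defining $\mathbb{M}$ in Eq.~\ref{the:solution}.

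Next, I would argue that applying this criterion layer by layer produces a globally minimal $t_{iter}$. Because backward computation timestamps $\tau_b^{(\cdot)}$ from Eq.~\ref{equ:startcompfinal} are independent of any merging choice, while the effect of declaring $l$ a merged-gradient layer is confined (via the conditions following the definition of merged-gradient) to zeroing $t_c^{(l)}$ and augmenting $p^{(l-1)}$, the natural traversal is from $l = L$ downward: each decision is made with the correct value of $\tau_c^{(l)}$ already in hand. I would then use an exchange argument. Suppose some optimal assignment $\mathbb{M}^\star$ disagrees with the greedy criterion at the largest $l$ where they differ; I would flip that bit and show the iteration time does not increase, using the per-layer analysis above applied to the already-determined state of layers $l+1, \ldots, L$.

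The main obstacle will be ruling out harmful cascading interactions. Concretely, toggling the merge bit at layer $l$ modifies $T_{ar}(p^{(l-1)})$, which shifts $\tau_c^{(l-2)}$ and can alter whether the criterion $\tau_b^{(l-3)} - \tau_c^{(l-2)} < a$ holds further down; the worry is that greedily declining to merge early might enable a strictly better cascade of merges later, or vice versa. The key lever is the linearity $T_{ar}(M) = a + bM$ of Eq.~\ref{equ:tcomm}: each successful merge saves exactly $a$ in the corresponding max-term and adds $b \cdot p^{(l)}$ back into the immediately preceding all-reduce, while contributing nothing to any other term. I would make this transparent by rewriting $t_{iter}$ as the maximum, over maximal contiguous groups of merged layers, of ``group start time plus group all-reduce cost.'' In this decomposition the contribution of each group is additive in its member sizes up to a single shared latency $a$, so the per-layer criterion is seen to minimise each group's contribution independently, and the exchange argument closes without interference between groups.
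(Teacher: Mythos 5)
Your proposal follows the same core route as the paper: the paper's proof is exactly the index shift $(3,2,1)\mapsto(l,l-1,l-2)$ applied to the \textbf{C.1}--\textbf{C.3} analysis, concluding that when $\tau_b^{(l-2)}-\tau_c^{(l)}<a$ the communication of layer $l-1$ finishes earlier if layer $l$ is merged, and then asserting that applying this test to every layer yields the optimum. Where you differ is that the paper stops at this local, per-layer claim and does not address the issue you correctly flag as the main obstacle: toggling the merge bit at layer $l$ changes $\tau_c^{(l-1)}$ (and hence $\tau_c^{(l-2)}$, etc.), so the membership test for lower layers is evaluated against a state that depends on earlier decisions; the paper's Algorithm~\ref{algo:mgbp} silently handles this by recomputing the communication start times after each merge, but the proof itself never argues that the greedy top-down sweep is globally optimal. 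Your exchange argument at the largest index of disagreement, combined with the observation that a successful merge makes every downstream $\tau_c^{(\cdot)}$ weakly earlier (so merges can only enable, never preclude, further beneficial merges), together with the rewriting of $t_{iter}$ over maximal contiguous merged groups where linearity of $T_{ar}$ makes each group's cost additive up to one shared latency $a$, is precisely the missing glue. So your proposal is not merely consistent with the paper's proof --- it is a strictly more complete version of it, supplying the global-optimality step that the paper takes for granted. (Minor note: the paper's proof twice writes the condition as $\tau_b^{(l-2)}-\tau_c^{(l)}<b$, which is a typo for $<a$; your version uses the correct constant throughout.)
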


\begin{proof}
For the $l^{th}$ layer, where $1<l\leq L$, since the layer $l$ has only two choices (merged-gradient layer or NOT merged-gradient). If $\tau_b^{(l-2)}-\tau_c^{(l)} < b$, we just need to prove the communication of its previous layer $l-1$ can be finished earlier if layer $l$ is a merged-gradient layer. I.e.,
\begin{equation}
\begin{split}
\text{max}\{\tau_c^{(l)}, \tau_b^{(l-2)}\}+T_{ar}(p^{(l-1)}+p^{(l)}) < \\
\text{max}\{\tau_c^{(l)}+T_{ar}(p^{(l)}), \tau_b^{(l-2)}\}+T_{ar}(p^{(l-1)}).
\end{split}
\end{equation}
As we have discussed in both \textbf{C.1} and \textbf{C.2} conditions, the above inequality holds in the condition of $\tau_b^{(l-2)}-\tau_c^{(l)} < b$. So under this condition, the start time stamp of layer $l-1$ is smaller than that not merged. Consequently, for all $\tau_b^{(l-2)}-\tau_c^{(l)} < a$, we do $(l)\oplus (l-1)$, then Eq. \ref{equ:prob} holds.
\end{proof}

\subsection{Algorithm}
Assume that the $N$-node cluster is connected by an Ethernet with a bandwidth $B$, in which each node has $G$ computation capability. Then the time cost of layer-wise communication can be computed according to Eq. \ref{equ:comm}, and the computation cost of backward propagation can be calculated by Eq. \ref{equ:comp}. Thus, $t_f$, $t_c^{(l)}$ and $t_b^{(l)}$, where $1 \leq l \leq L$, are known. According to Theorem \ref{theorem:opt}, we drive the algorithm to find $\mathbb{M}$ as shown in Algorithm \ref{algo:mgbp}.

\begin{algorithm}[h]
	\caption{Find all merged-gradient layers: $\mathbb{M}$}\label{algo:mgbp}
 	\small
		\textbf{Input: }$a$, $b$, $L$, $N$, $G$, $\bm{p}=[p^{(1)},p^{(2)},...,p^{(L)}]$.\\
		\textbf{Output: $\mathbb{M}$}
	\begin{algorithmic}[1]
		\State Initialize $\bm{tc}[1...L]$; // Communication time cost
		\State Initialize $\bm{tb}[1...L]$; // Backward computation time cost
		\State Initialize $\bm{\tau b}$[1...$L$]; // Backward computation start time
		\State $tf$=$\sum_{l=1}^{L}T_{f}(\bm{p}[l], G)$;
		\For{$l=1\rightarrow L$}
			\State $\bm{tc}[l]=T_{ar}(\bm{p}[l], N)$;
			\State $\bm{tb}[l]=T_{b}(\bm{p}[l], G)$;
		\EndFor
		\State $\bm{\tau b}[L]$=$tf$;
		\For{$l=L-1\rightarrow 2$}
			\State $\bm{\tau b}[l]$ = $\bm{\tau b}[l+1]$ + $\bm{tb}[l+1]$
		\EndFor
		\State $\bm{\tau c}$=\Call{CalculateCommStart}{$\bm{tc}, \bm{tb}, \bm{\tau b}, L$}; 
		\For{$l=L\rightarrow 2$}
		\If{$\bm{\tau b}[l-2]-\bm{\tau c}[l] < a$}
		\State \Call{Merge}{$\bm{\tau b}, \bm{tc}, \bm{p}, l, N$};
		\State $\bm{\tau c}$=\Call{CalculateCommStart}{$\bm{tc}, \bm{tb}, \bm{\tau b}, L$};
		\State $\mathbb{M}$.push($l$);
		\EndIf
		\EndFor
		\Procedure{Merge}{$\bm{\tau b}, \bm{tc}, \bm{p}, l, N$}
		\State $\bm{tc}[l]=0$;
		\State $\bm{p}[l-1]=\bm{p}[l-1]+\bm{p}[l]$;
		\State $\bm{tc}[l-1]=T_{ar}(\bm{p}[l-1], N)$;
		\EndProcedure
		\Procedure{CalculateCommStart}{$\bm{tc}, \bm{tb}, \bm{\tau b}, L$}
		\State Initialize $\bm{\tau c}[1...L]$; // Communication start time
		\State $\bm{\tau c}[L]=\bm{\tau b}[L]+\bm{tb}[L]$;
		\For{$l=L-1\rightarrow 1$}
		\State $\bm{\tau c}[l]=\text{max}\{\bm{\tau c}[l+1]+\bm{tc}[l+1], \bm{\tau b}[l]+\bm{tb}[l]\}$;
		\EndFor
		\State \text{Return } $\bm{\tau c}$;
		\EndProcedure
	\end{algorithmic}
\end{algorithm}

The algorithm first (line 1-7) initializes the layer-wise backward computation cost and communication cost according to Eq. \ref{equ:comp} and Eq. \ref{equ:comm} respectively with system settings and bechmarks in the first several iterations. Then (line 8-11) the layer-wise start time of backward computation, which is not changed in its followed computation, is calculated based on Eq. \ref{equ:startcompfinal}, and then iteratively compute the layer-wise start time of communication based on the formula of Eq. \ref{equ:starttfinal}. After that (line 12-16), the merged-gradient layers are found according to Eq. \ref{the:solution}, in which if there is a layer found as a merged-gradient layer, the communication time of its previous layer should be updated according to Eq. \ref{ass:1}, Eq. \ref{ass:2} and Eq. \ref{ass:3}.

The proposed algorithm has a time complexity of $O(L^2)$. For a merged-gradient layer, the algorithm needs to re-calculate the start time of communication of each layer, which is an $O(L)$ search, and it has maximal $L-1$ merged-gradient layers, so the time complexity of the algorithm is $O(L^2)$. Since the algorithm is an one-time calculation at the beginning of the training and it needs not to be re-calculated during the training process, so the overhead of finding $\mathbb{M}$ has no any impact of the training performance.

\begin{algorithm}[h]
	\caption{MG-WFBP S-SGD}\label{algo:gewfbp}
	
	\textbf{Input: } $\bm{D}=[\{X_1, y_1\},...,\{X_n, y_n\}]$, $I$, $net$, $N$, $bs$\\
	\textbf{Output: $\bm{W}=[W^{(1)}, W^{(2)},...W^{(L)}]$}
	\begin{algorithmic}[1]
		\small
		\For{$k=1\rightarrow N$}
		\State Initialize shared and synchronized queue $Q$;
		\State Obtain the parameter size $\bm{p}$ from $net$;
		\State Allocate memories $\bm{W}$;
		\State Initialize $\bm{W}$ in all accelerators;
		\State Get $\mathbb{M}$ from Algorithm \ref{algo:mgbp};
		\State \Call{AsyncHandleComputation}{$Q, \mathbb{M}$};
		\For{$i=1\rightarrow I$}
		\State $di=(i*bs*N)\%n+(k-1)*bs$;
		\State $d=D[di:di+bs]$;
		\State \Call{AsyncHandleComputation}{$Q,d,L$};
		\State WaitForLastCommunicationFinished();
		\State $\bm{W}=\bm{W}-\eta\cdot\nabla \bm{W}$,
		\EndFor
		\EndFor
		\State NotifyFinished(); // Set $isRunning$ to false
		
		\Procedure{AsyncHandleComputation}{$Q,d,L$}
		\State $o=d$;
		\For{$l=1\rightarrow L$}
		\State $o$=FeedForward($l,o$);
		\EndFor
		\For{$l=L\rightarrow 1$}
		\State BackwardPropagation($l$);
		\State $Q.\text{push}(l)$;
		\EndFor
		\EndProcedure
		
		\Procedure{AsyncHandleCommunication}{$Q, \mathbb{M}$}
		\State Initialize $lb$; // layerBuffer
		\While{\textit{isRunning}}
		\State $l=Q.\text{pop()}$;
		\State $lb$.push($l$);
		\If{$l \notin \mathbb{M}$}
		\State SynchonizedAllReduce($lb[0],count$);
		\EndIf
		\If{$l=1$}
		\State NotifyLastCommunicationFinished();
		\EndIf
		\EndWhile
		\EndProcedure
	\end{algorithmic}
	
\end{algorithm}
We denote the WFBP algorithm integrated with the solution $\mathbb{M}$ as MG-WFBP. In MG-WFBP, the merged-gradient layers should be communicated with their previous layers. As a result, MG-WFBP achieves the minimal iteration time of S-SGD under known DNNs and system configurations. The algorithm of MG-WFBP S-SGD is shown in Algorithm \ref{algo:gewfbp}. For each worker, the algorithm first (line 2-6) initializes related variables and calculate $\mathbb{M}$ by using Algorithm \ref{algo:mgbp}. Then (line 7, 24-30) it reads the layer number from the shared queue $Q$ and decides whether its gradients should be communicated. After that (line 9-13), it starts the loop of iteration, and iteratively reads data (line 16-21) to do feed forward operations and backward propagation followed by pushing the layer number into the shared queue. Finally, the algorithm notifies a message of \textit{isRunning=false} to finish training.

\section{Evaluation}\label{s:eval}
We evaluate the performance of MG-WFBP by real experiments on an 8-node GPU cluster with 10GbE, and also by simulations on larger clusters with up to 64 nodes. Two popular CNNs, namely GoogleNet \cite{szegedy2015going} with a batch size of 64 and ResNet-50 \cite{he2016deep} with a batch size of 32, are chosen to test the performance of distributed training with ImageNet dataset ILSVRC-2012 \cite{deng2009imagenet} which includes about 1.28 million training images of 1000 categories. GoogleNet has about 13 millions of parameters, while ResNet-50 has about 25.5 millions. All parameters and gradients are stored as 32-bit single precision floating point numbers.

\subsection{Statistic data}

To verify the communication model in Eq. \ref{equ:comm} empirically, we first present some foregone results including the distribution of layer-wise gradient sizes of evaluated CNNs, and the time of the all-reduce operation all-reduce in a 10Gbps Ethernet (10GbE) using OpenMPI v3.1.1. The distribution of gradient size is shown in Fig. \ref{fig:commoverhead}(a), which shows that the number of parameters of each layer is mainly located in the range of $[10^2,5\times 10^6]$. The measured time of all-reduce under the 10GbE interconnect cluster are shown in Fig. \ref{fig:commoverhead}(b). Take the size of parameters ($4p$ in floating points) as the variable, we can see that the startup overheads (i.e., $2(N-1)\times \alpha$ in the ring-based all-reduce algorithm) are $90.52\mu s$, $271.56\mu s$ and $633.64\mu s$ in 2-, 4- and 8-node clusters with the 10GbE interconnect respectively.

\begin{figure}[!h]
	\centering
		\subfigure[]
		{
			\includegraphics[width=0.48\linewidth]{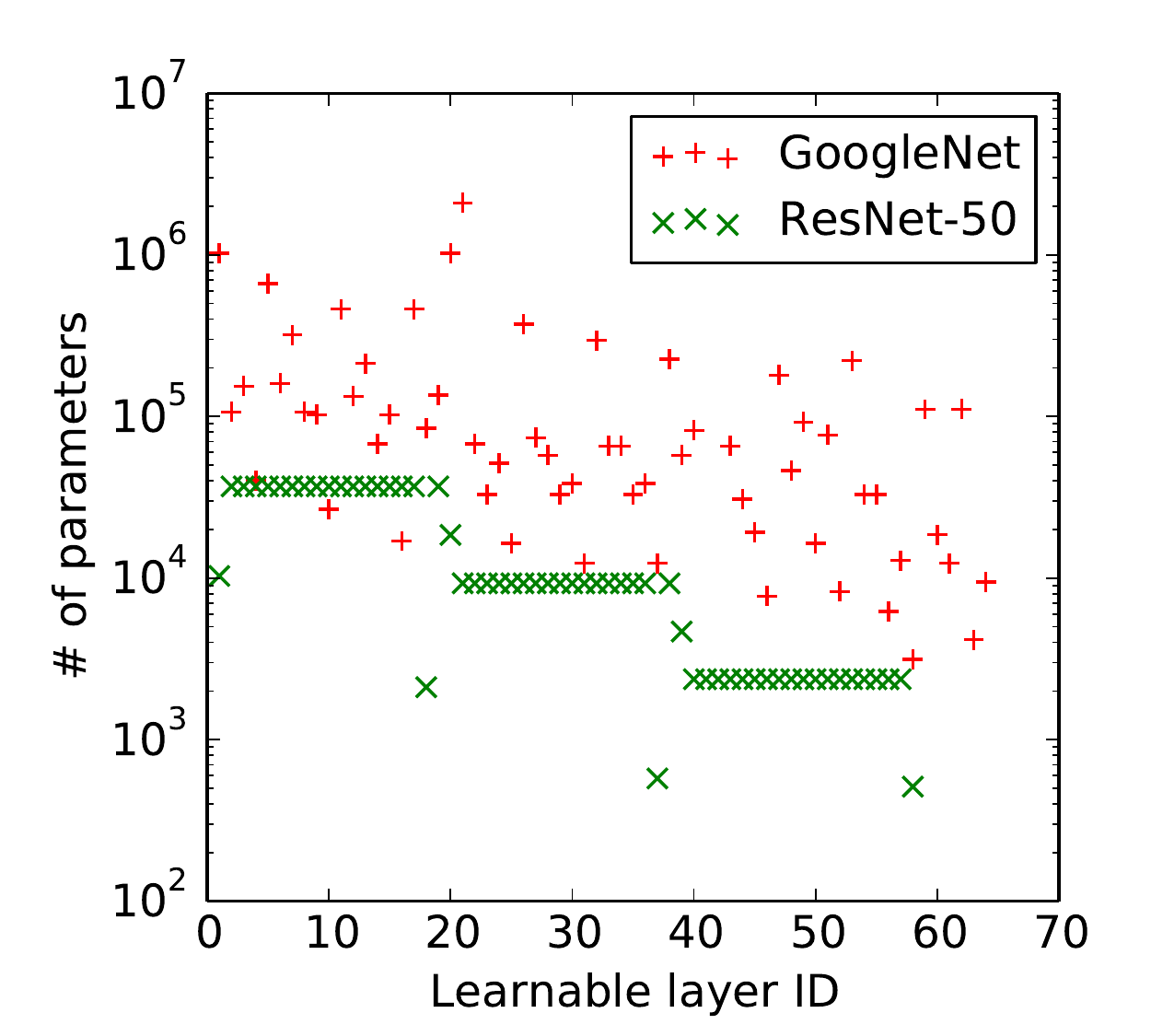}
		}\hspace{-5mm}
		\subfigure[]
		{
			\includegraphics[width=0.48\linewidth]{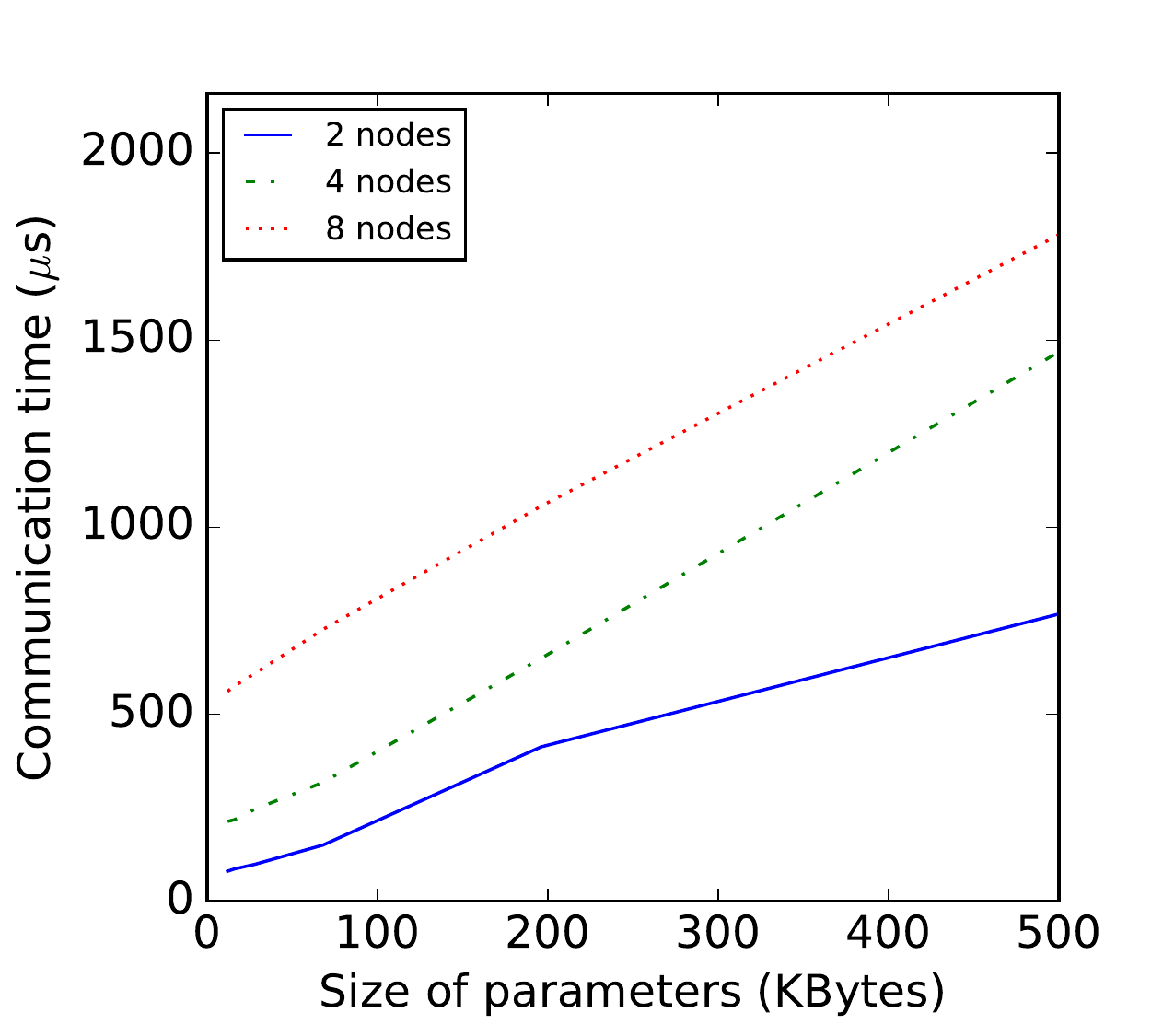}
		}\hspace{-5mm}
\vspace{-5pt}
	\caption{(a) The distribution of layer-wise gradient size of two neural networks. (b) The communication time of the all-reduce along with the size of parameters with the prediction and the measurement.}
	\label{fig:commoverhead}
	\vspace{-10pt}
\end{figure}

\subsection{Real-world experiments}

We integrate WFBP \cite{awan2017s}\cite{zhang2017poseidon}, single-layer communication Sync EASGD (SyncEASGD) \cite{you2017scaling} and our proposed MG-WFBP into B-Caffe\footnote{B-Caffe is an optimized distributed deep learning framework based on Caffe \cite{jia2014caffe}.}, and test the performance across an 8-node GPU cluster with 10GbE. We also compare the scaling efficiencies with TensorFlow. Each server has one Nvidia Tesla K80 card (i.e., 2 GPUs). The OS system is CentOS 7.2, and the major software libraries include CUDA-8.0, cuDNNv6 and NCCL. The compared TensorFlow is at v1.3, and it uses parameter servers to do S-SGD using the official benchmark script\footnote{https://github.com/tensorflow/benchmarks}.

In the real-world experiments, we run 13 epochs to verify the convergence of the CNN training, in which 50000 images are used to test the top-1 accuracies.

\begin{figure}[!h]
	\centering
	\subfigure[Speedup]
	{
		\includegraphics[width=0.46\linewidth]{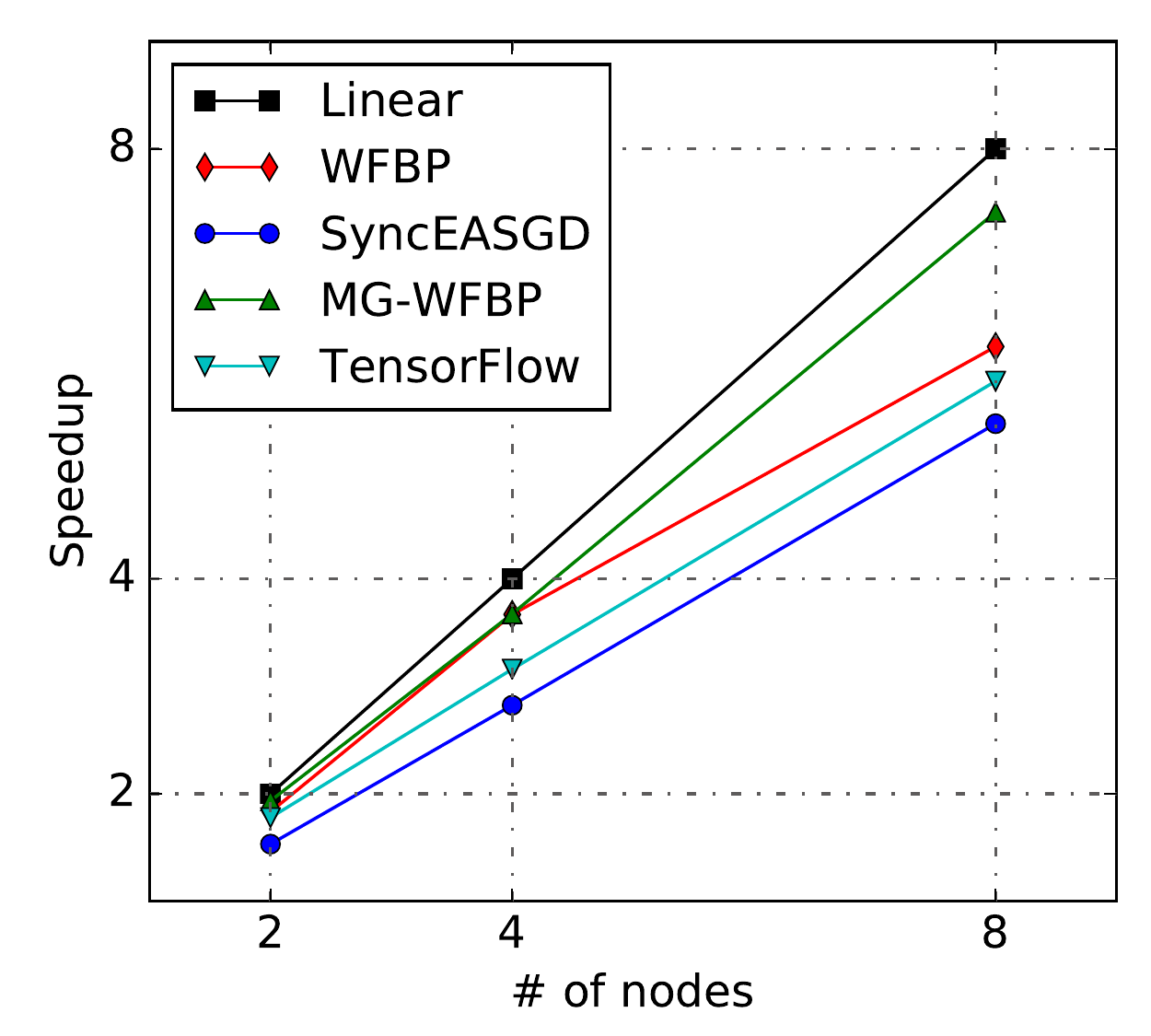}
	}
	\subfigure[Top-1 validation accuracy]
	{
		\includegraphics[width=0.46\linewidth]{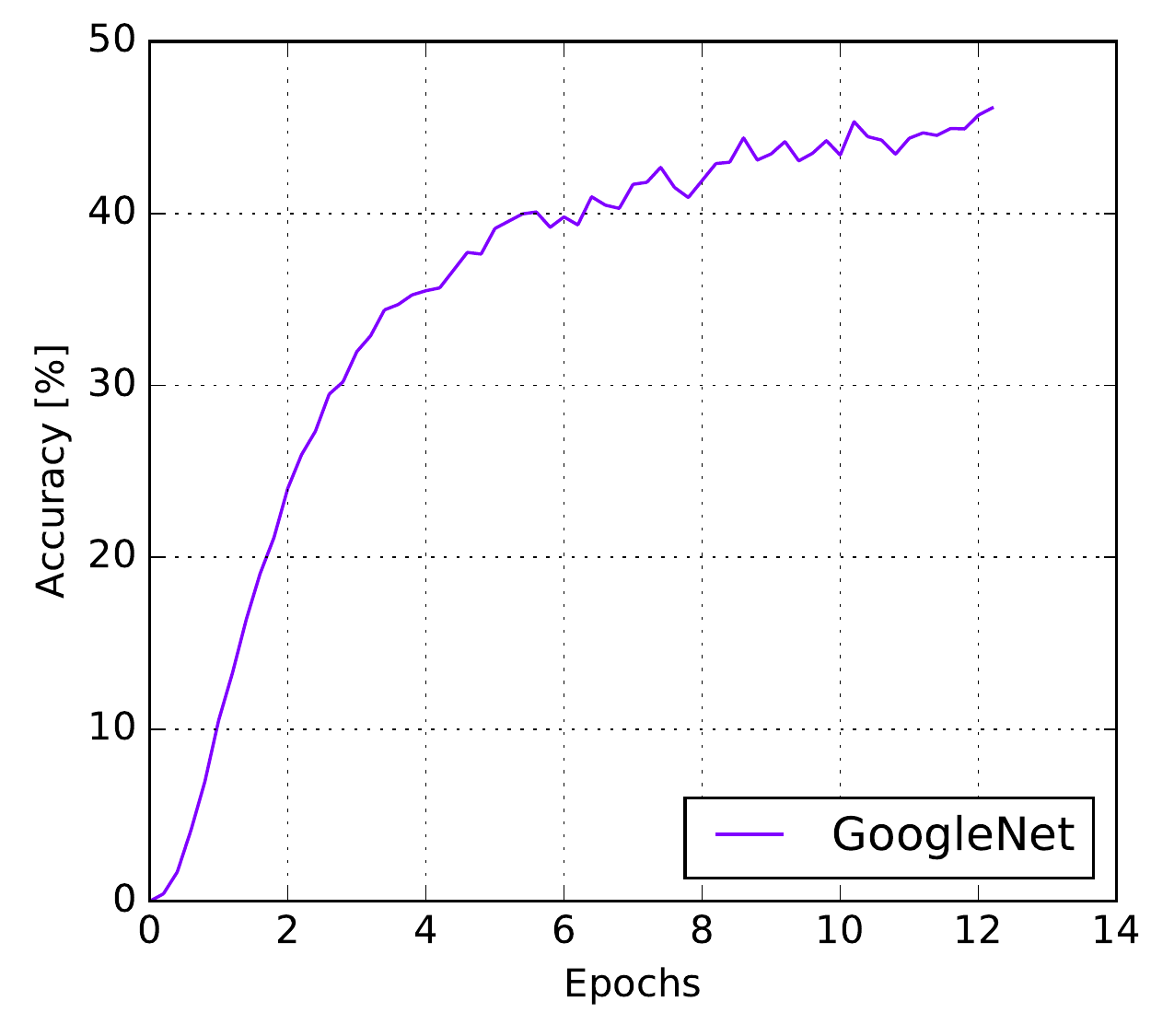}
	}
	\vspace{-5pt}
	\caption{The performance of GoogleNet on the K80 cluster connected with 10GbE. Baseline of the speedup of SGD is on a single machine with 2 GPUs.}
	\label{fig:realresultsgooglenet}
\vspace{-10pt}
\end{figure}

\begin{figure}[!h]
	\centering
	\subfigure[Speedup]
	{
		\includegraphics[width=0.46\linewidth]{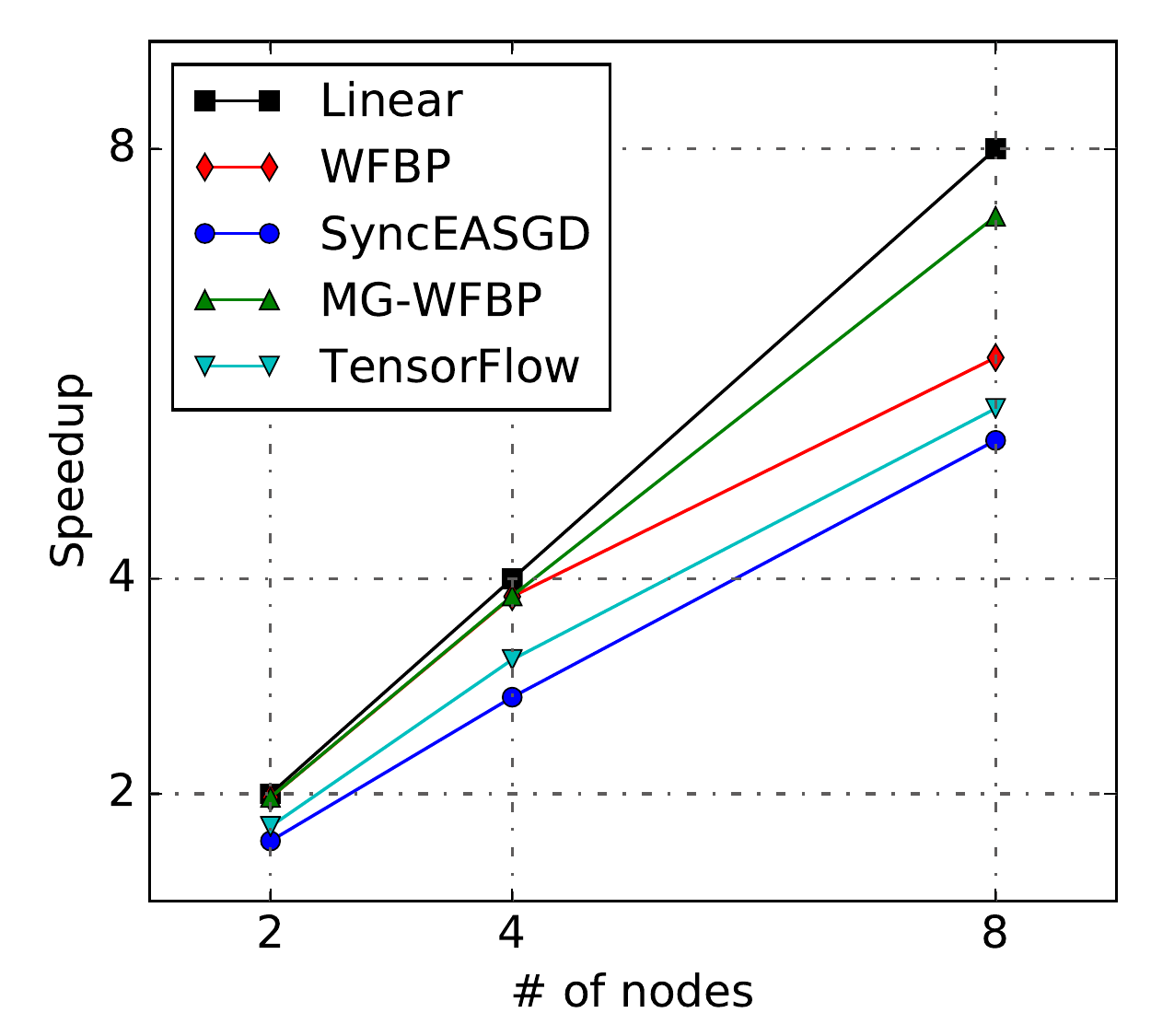}
	}
	\subfigure[Top-1 validation accuracy]
	{
		\includegraphics[width=0.46\linewidth]{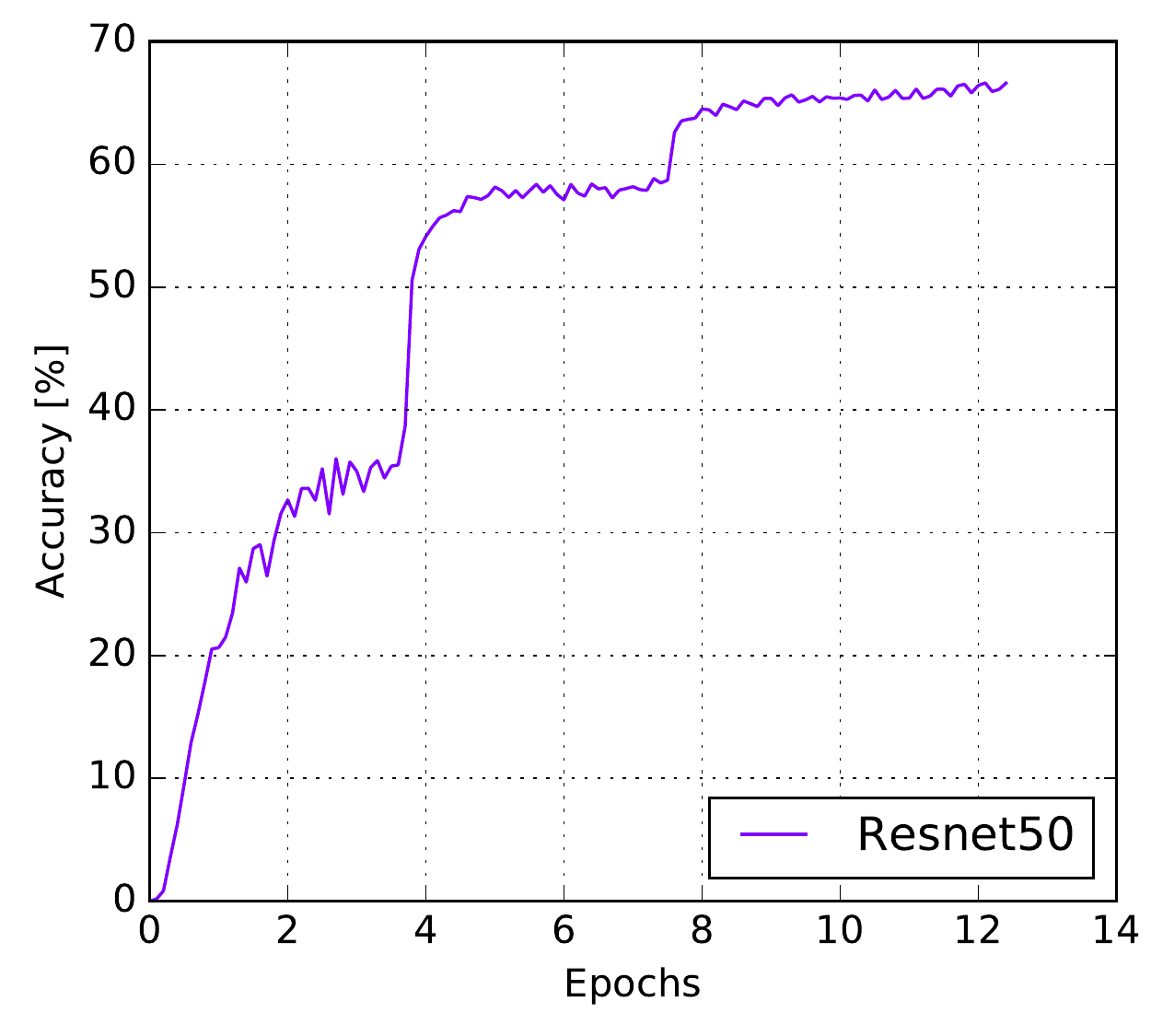}
	}
	\vspace{-5pt}
	\caption{The performance of ResNet-50 on the K80 cluster connected with 10GbE. Baseline of the speedup of SGD is on a single machine with 2 GPUs.}
	\label{fig:realresultsresnet}
	\vspace{-10pt}
\end{figure}

\begin{figure}[!h]
	\centering
	\subfigure[GoogleNet]
	{
		\includegraphics[width=0.48\linewidth]{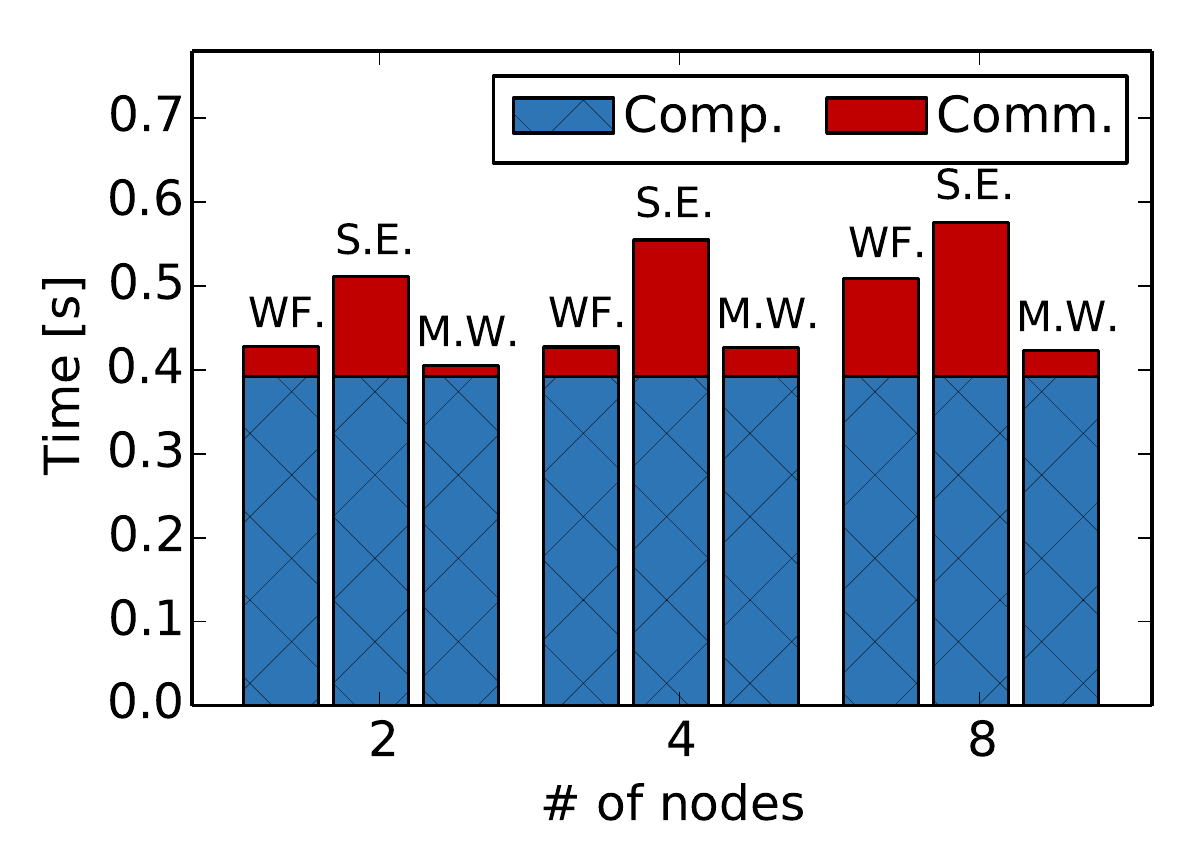}
	}\hspace{-5mm}
	\subfigure[ResNet-50]
	{
		\includegraphics[width=0.48\linewidth]{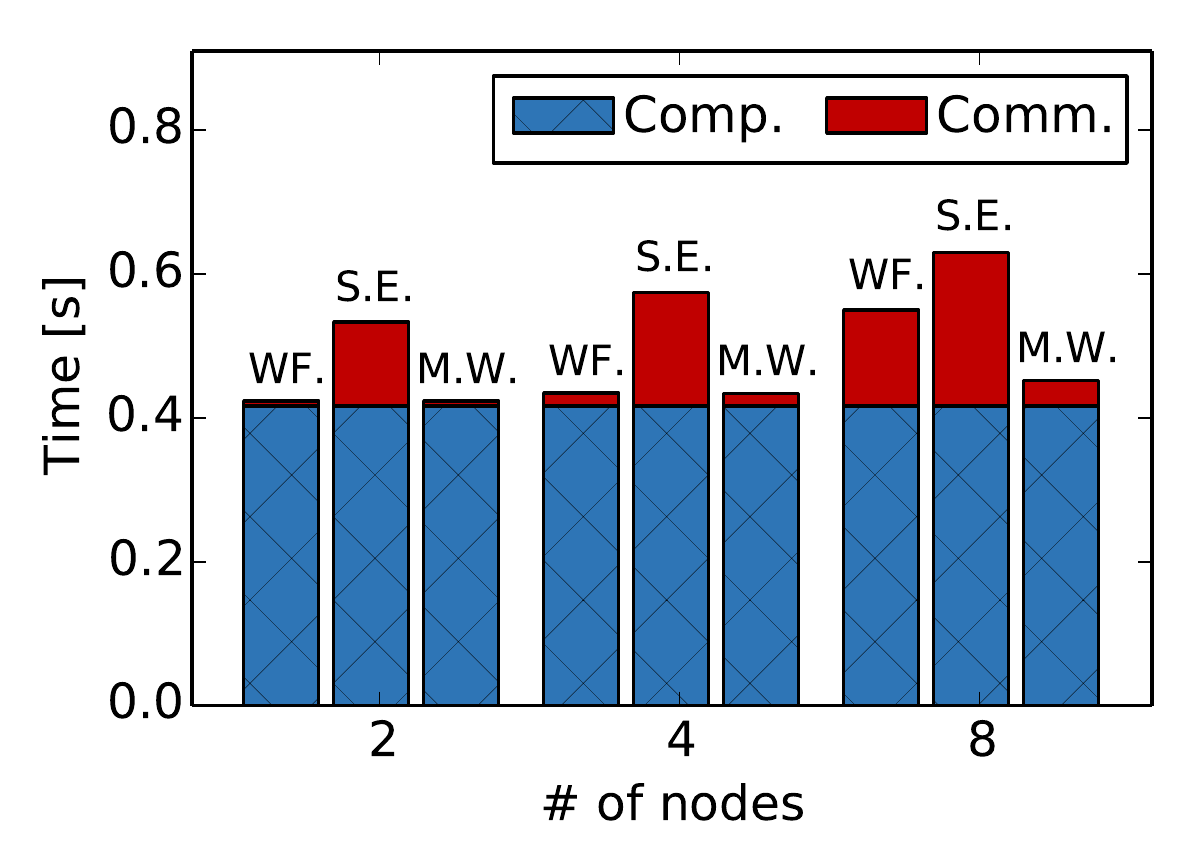}
	}
	\caption{Time costs of non-overlapped communication and computation. `WF.', `S.E.' and `M.W.' indicate WFBP, SyncEASGD and MG-WFBP algorithms respectively. `Comp.' refers to the computation cost (i.e., $t_f+t_b$), and `Comm.' refers to the non-overlapped communication cost (i.e., $t_c^{no}$).}
	\label{fig:realcomm}
\end{figure}

The experimental results of GoogleNet and ResNet-50 in the K80 cluster are shown in Fig. \ref{fig:realresultsgooglenet} and Fig. \ref{fig:realresultsresnet} respectively. The non-overlapped communication cost compared to the computation time is shown in Fig. \ref{fig:realcomm}. The baseline is the iteration throughput of two GPUs in a single machine, in which no communication via Ethernet is required. And the speedup of throughput on multiple nodes are compared to the baseline. From Fig. \ref{fig:realcomm}, we can observe that for both GoogleNet and ResNet, MG-WFBP performs better than WFBP, SyncEASGD and TensorFlow. SyncEASGD dose not overlap the communication with computation; and hence the communication cost increases when the number of nodes increases. As a consequence, the scaling efficiency of SyncEASGD is poor. WFBP achieves near linear scaling on 2 and 4 nodes, in which the non-overlapped communication overhead are small. When scaling to 8 nodes, however, WFBP has an obvious drop in efficiency due to the increased startup time of layer-wise communication which cannot be totally hidden by computation. Regarding the performance of TensorFlow, it uses parameter servers to do the model aggregation. On one hand, the centralized parameter server based algorithm could easily suffer a bandwidth pressure in the parameter server on the lower speed network \cite{zhang2017poseidon}. On the other hand, it takes two communication directions (workers to PS, and PS to workers) to finish the model synchronization, which introduces more overhead in the synchronization pass. Therefore, though TensorFlow exploits the WFBP technique, the PS-based method performs worse than the decentralized method. Our proposed algorithm has a very small non-overlapped communication cost even on the 8-node cluster, so the scaling efficiency is still close to linear.  In summary, MG-WFBP achieves about $1.2$x and $1.36$x speedups compared to WFBP and SyncEASGD respectively on the 8-node K80 cluster on both GoogleNet and ResNet-50.

\subsection{Simulation}
Due to the hardware limitation, we do not have a very large GPU cluster to support more large-scale experiments. So we conduct simulations based on the real single-GPU performance and the network performance model. Based on the measured layer-wise backward propagation time on the real K80 GPU, we simulate WFBP, SyncEASGD and MG-WFBP by scaling from 4 nodes to 64 nodes.

\begin{figure}[!h]
	\centering
	\subfigure[GoogleNet]
	{
		\includegraphics[width=0.48\linewidth]{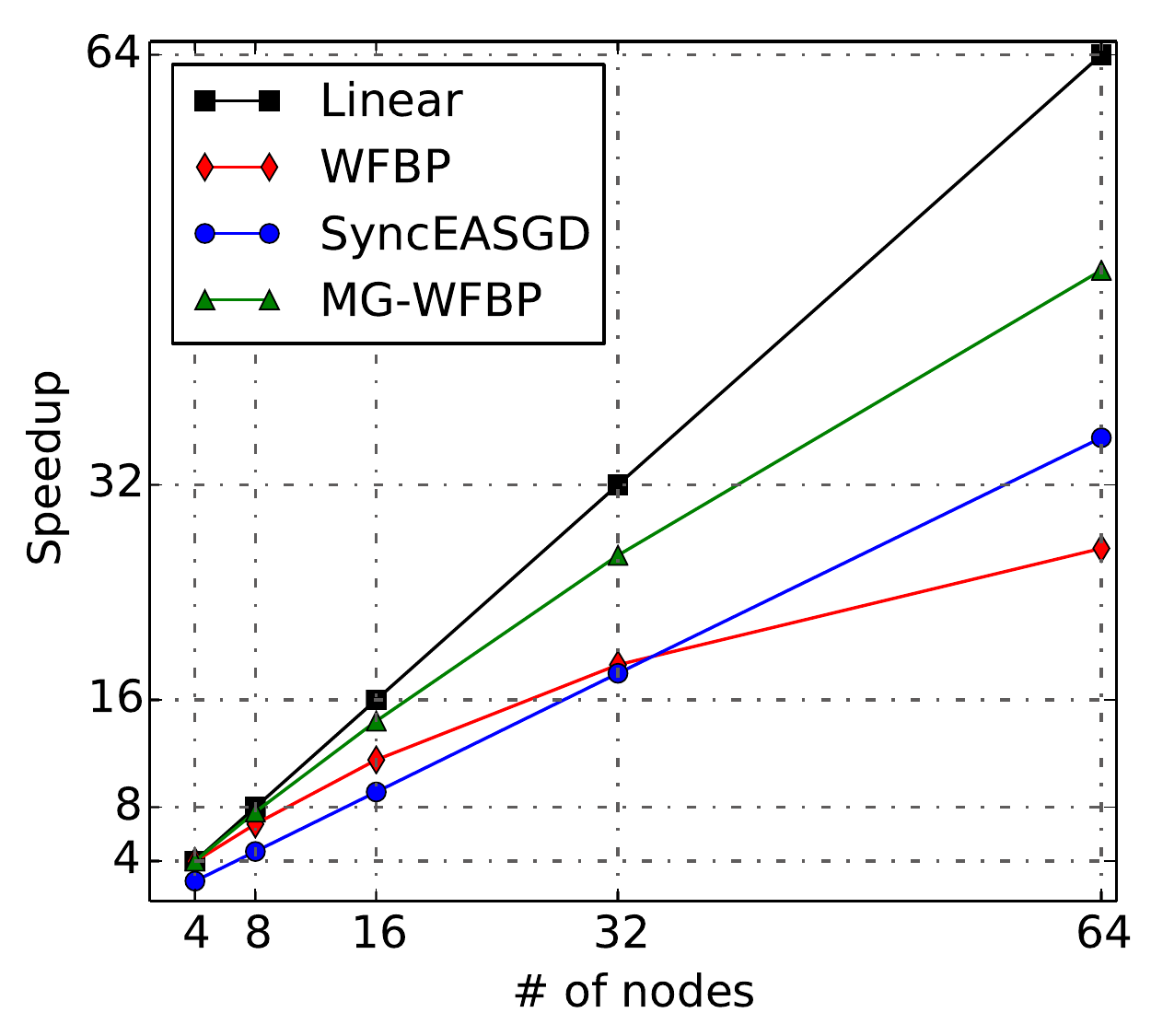}
	}\hspace{-3mm}
	\subfigure[ResNet-50]
	{
		\includegraphics[width=0.48\linewidth]{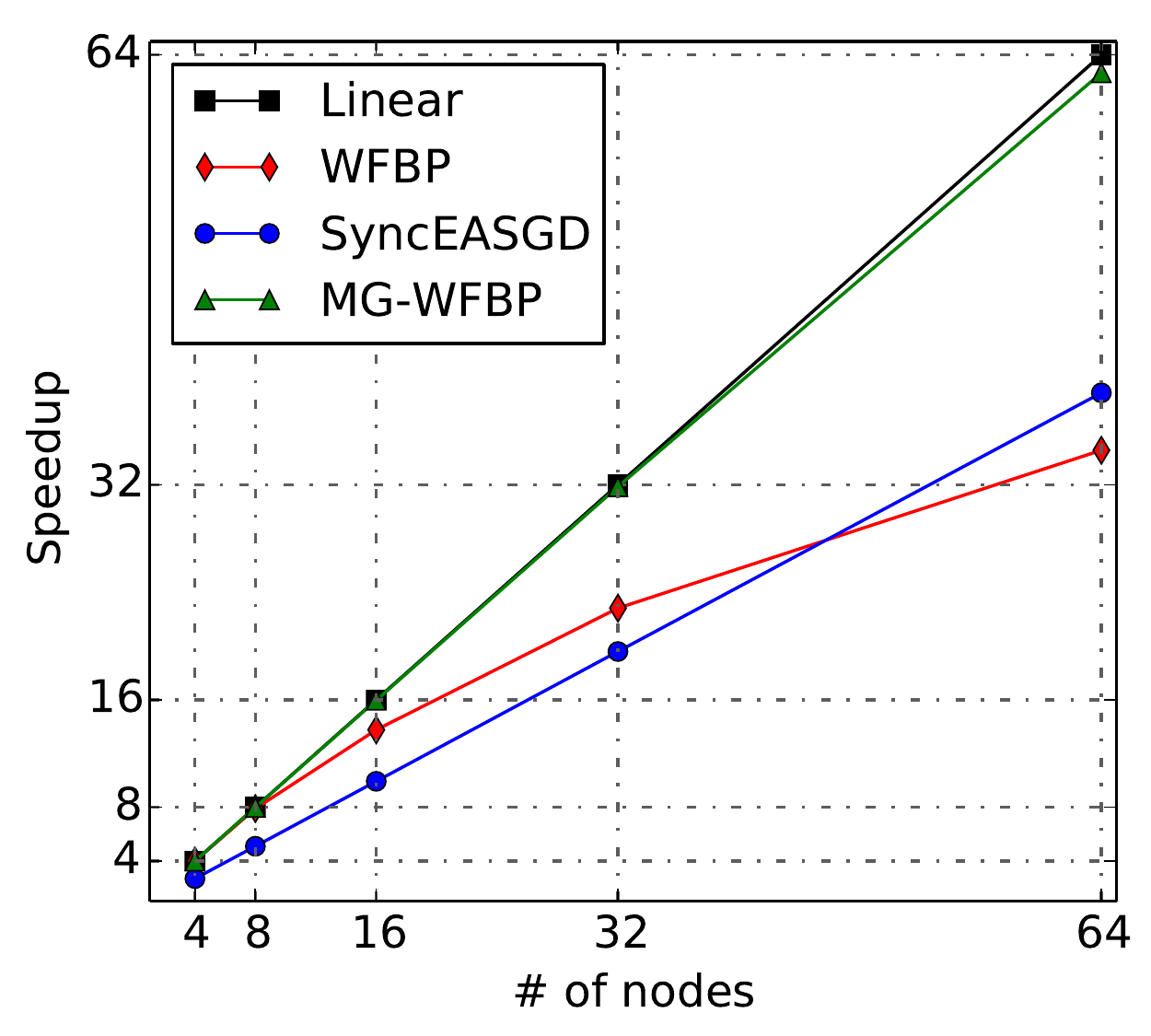}
	}\\
	\vspace{-5pt}
	\caption{The performance comparison on the simulated K80 cluster connected with 10GbE. Baseline of the speedup of SGD is on a single K80.}
	\label{fig:simspeedupk80}
	\vspace{-10pt}
\end{figure}

\textbf{Overall Performance}. We simulate to train GoogleNet and ResNet-50 by scaling from 4 nodes to 64 nodes. The scaling performances are shown in Fig. \ref{fig:simspeedupk80}. On the cluster with K80 GPUs, our proposed algorithm MG-WFBP achieves the best speedup. On the 64-node cluster, MG-WFBP outperforms WFBP and SyncEASGD by $1.78$x and $1.35$x, respectively on GoogleNet. On ResNet-50, MG-WFBP performs almost linear speedup, while WFBP and SyncEASGD only have around $55\%$ scaling efficiency in the 64-node cluster. It is important to notice that the lines of WFBP and SyncEASGD have a crossing point in Fig. \ref{fig:simspeedupk80}. This is because the two algorithms are sub-optimal in utilizing the network bandwidth; when the computation has the opportunity to overlap with communication, and the startup time of network communication is not that large (e.g., 4-16 nodes in the K80 cluster), then WFBP would have the advantage to hide the communication compared to SyncEASGD. But when scaling to large number of nodes (e.g., 64 nodes), the startup time of communication becomes much larger so that it is hard to be hidden, then using a single-layer communication could become a better approach. As we can see, SyncEASGD achieves better scaling efficiency than WFBP in the 64-node cluster on both tested CNNs. MG-WFBP not only overlaps the communication with computation, but it also finds the optimal communication message size. So it achieves better scaling efficiency than SyncEASGD and WFBP. Finally, on training ResNet-50, MG-WFBP achieves about $1.75$x and $1.45$x speedups compared to WFBP and SyncEASGD respectively on the simulated 64-node K80 cluster.

\section{Related Work}\label{s:bm}
The wait-free backward propagation (WFBP) algorithm has recently been proposed to reduce such impact by overlapping communication with computation \cite{awan2017s}\cite{zhang2017poseidon}. In WFBP, the backward computation operations can be started without waiting for the completion of the previous round of data communication. If the communication cost of layer $l+1$ is smaller than the cost of gradients computation of layer $l$, then the communication cost can be completely hidden (except the first layer); and as a result, the scaling efficiency can be close to linear \cite{awan2017s}\cite{zhang2017poseidon}. In practice, however, many DNN models are trained on high-throughput GPUs that result in very short computing time for each backward layer, while it needs to wait for gradient aggregation before starting the next iteration especially on low bandwidth networks (e.g., 10GbE). Current distributed training systems \cite{hoefler2010toward}\cite{jia2018highly} exploit tensor fusion that merges small size of gradients before communicating across workers to reduce the communication overhead. The parameter server (PS) method \cite{li2014communication} is proposed for parallelism between computation and communication, but it easily suffers from the communication traffic jam since PS needs to collect the gradients from all the workers. Sufficient factor broadcasting (SFB) \cite{zhang2017poseidon} uses the matrix factorization technique to reduce the volume of the data that needs to be communicated. Although SFB uses P2P communication to eliminate the bandwidth pressure on the PS, it brings a growing number of sufficient factors with both the increasing number of data samples and workers. Zhang et al. \cite{zhang2017poseidon} proposed Poseidon system with hybrid communication of PS and SFB combined with the WFBP algorithm, and they have achieved 15.5x speed-up on 16 single-GPU (TITANX Pascal) machines. Unfortunately, due to drawbacks of PS and SFB and the communication scheme, Poseidon could also be far away from linear scaling with the number of workers increased due to the communication bottleneck.

In the HPC community, the MPI data communication collectives have been redesigned for distributed training to improve the communication performance across multiple machines \cite{awan2017s}. Many MPI-like implementations, such as OpenMPI\footnote{https://www.open-mpi.org/}, NCCL2\footnote{https://developer.nvidia.com/nccl}, Gloo\footnote{https://github.com/facebookincubator/gloo} and MVAPICH2-GDR\footnote{https://mvapich.cse.ohio-state.edu/}, support efficient CUDA-aware communication between GPUs via network, and many state-of-the-art deep learning frameworks (e.g., TensorFlow, Caffe2 and CNTK) integrate NCCL2 or Gloo for their distributed training modules. Even though these libraries provide very efficient communication collectives, the data communication would still become bottleneck when the communication-to-computation ratio is high, and S-SGD does not scale very well. 

\section{Conclusion}\label{s:conclusion}
In this work, we first show that existing state-of-the-art communication strategies, say wait-free backward propagation (WFBP) and single-layer communication (SyncEASGD), are sub-optimal in the distributed SGD training of deep learning when the communication-to-computation ratio is high. Then we generalize the communication problem as an optimization problem and develop an efficient optimal solution. We then propose the MG-WFBP strategy and implement it in our open-source platform B-Caffe. MG-WFBP achieves a better scalability than WFBP and SyncEASGD in our tested experiments with two popular CNNs (GoogleNet and ResNet-50) across real-world and simulated 10GbE GPU clusters.

\bibliographystyle{IEEEtran}
\Urlmuskip=0mu plus 1mu
\bibliography{merged_gradients.bbl}
\end{document}